
\documentclass[conference,letterpaper,10.5pt]{IEEEtran}


%
%
\usepackage[utf8]{inputenc} 
\usepackage[T1]{fontenc}
\usepackage{url}
\usepackage{ifthen}
\usepackage{cite}
\usepackage[cmex10]{amsmath} 


\interdisplaylinepenalty=2500 

\hyphenation{op-tical net-works semi-conduc-tor}

\usepackage{amsfonts, amsthm}
\usepackage{ dsfont }
\renewcommand{\r}{\boldsymbol{r}}
\newcommand{\x}{\boldsymbol{x}}
\newcommand{\z}{\boldsymbol{z}}
\renewcommand{\c}{\boldsymbol{c}}

\newcommand{\y}{\boldsymbol{y}}

\newcommand{\Bin}{Bin}

\newtheorem{theorem}{Theorem}
\newtheorem{proposition}{Proposition}

\newtheorem{definition}{Definition}

\title{Fast Decoding of Union-Free Codes}
\author{Ilya Vorobyev \\
	Skolkovo Institute of Science and Technology
}

\date{}

\begin{document}

	

	\maketitle
	
	\begin{abstract}
		Union-free codes and disjunctive codes are two combinatorial structures, which are used in a non-adaptive group testing to find a set of $d$ defective elements among $n$ samples by carrying out the minimal number of tests $t$.  It is known that union-free codes have a larger rate, whereas disjunctive codes provide a more efficient decoding algorithm. In this paper we introduce a new family of codes for a non-adaptive group testing with fast decoding. The rate of these codes is not less than the rate of disjunctive codes, while the decoding algorithm has the same complexity. In addition, we derive a lower bound on the rate of new codes for the case of $d=2$ defectives, which is significantly better than the bound for disjunctive codes and almost as good as the bound for union-free codes.
		
	\end{abstract}

	\section{Introduction}\label{sec::intro}
	Group testing is a well-known combinatorial problem, which was introduced by Dorfman in 1943~\cite{dorfman1943detection}. Suppose that we have a large set of $n$ samples, some of which are defective. Our goal is to find the set of defective samples by using the minimal number of tests on properly chosen subsets of defectives. The result of the test is positive if at least one tested sample is defective and negative otherwise.
	
	There are two main types of group testing algorithms. In \textit{adaptive} algorithm tests are performed one by one; the results of previous tests can be used to design the next test. In \textit{non-adaptive} algorithms all tests are predefined and can be performed in parallel. 
	
	On the other hand, group testing problems can be divided into combinatorial and probabilistic models. In the combinatorial model we design algorithms, which always find the set of defectives, provided that its cardinality is at most $d$. In the probabilistic model there is a distribution over the sets of defectives. The designed algorithm should be able to find the set of defectives with high probability. 
	
	In this paper we consider only combinatorial non-adaptive algorithms. We refer an interested reader to~\cite{du2000combinatorial,aldridge2019group,d2014lectures,cicalese2013fault}, where other models are described.
	
	In a seminal paper~\cite{kautz1964nonrandom} the authors introduced two families of codes, which can solve non-adaptive group testing problems. \textit{Union-free codes} guarantee that any two distinct sets of defectives give different vectors of test results. Therefore, it is possible to restore the set of defectives from the vector of test results. However, the brute force decoding algorithm, which iterates over all possible sets of $\leq d$ defectives, requires $O(tn^d)$ operations in the worst case, where $t$ is a number of tests. \textit{Disjunctive codes} (also known as \textit{cover-free codes} in the literature) provide a much more efficient decoding algorithm, which requires only $O(tn)$ operations. Yet, disjunctive codes have a lower rate, i.e. for the same number of samples we need to perform a larger number of tests to find the set of defectives.
	
	In a recent paper~\cite{fan2021strongly} the authors defined a new notion of  \textit{a strongly $d$-separable matrix} ($d$-SSM), which also can be used to find the set of $d$ defectives. It is proved that the rate of codes from this family is not less than the rate of disjunctive codes, whereas it provides an efficient $O(tn)$ decoding algorithm. Moreover, the authors of~\cite{fan2021strongly} proved a lower bound $0.2213$ on the rate of $2$-SSM, which is better than the best lower bound $0.1814$ on the rate of $2$-disjunctive codes.
	
	In our paper we modify union-free codes to provide them with an efficient $O(tn)$ decoding algorithm. The rate of the new family of codes is not less than the rate of disjunctive codes. For the case of $d=2$ defectives we prove a lower bound $0.3017$ on the rate, which is significantly better than the bound for $2$-SSM. Note that it is only slightly less than the lower bound $0.3135$ for union-free codes, proved in~\cite{coppersmith1998new}.
	
	The rest of the paper is organized as follows. In Section~\ref{sec::notation} we give all necessary notations and discuss some known results. In Section~\ref{sec::main} we introduce a new family of codes and its decoding algorithm. The main result is proved in Section~\ref{sec::proof}. Section~\ref{sec::conclusion} concludes the paper.
	
	\section{Preliminaries}\label{sec::notation}
	Denote the set of integers $\{1,2,\ldots,n\}$ as $[n]$. A number of non-zero entries in a vector is called its \textit{weight}.
	Call a vector $\z=(\z(1),\ldots,\z(n))$ a Boolean sum of vectors $\x=(\x(1), \ldots, \x(n))$ and $\y=(\y(1), \ldots, \y(n))$ if $\z(i)=\x(i)\vee\y(i)$. Denote such operation as $\z=\x\bigvee\y$. Say that a vector $\x$ covers a vector $\y$ if $\x\bigvee\y=\x$. The binary entropy function $h(x)$ is defined as follows
	$$
	h(x)=-x\log_2x-(1-x)\log_2(1-x).
	$$ 
	
	A non-adaptive algorithm consisting of $t$ tests can be described with a matrix $C$ of size $t\times n$. 
	Each row corresponds to one test and each column corresponds to one sample. We put $1$ into intersection of a column $i$ and a row $j$ if the sample $i$ is included in the test $j$. 
	We refer to columns of the matrix $C$ as $\c_1,\ldots, \c_n$. A vector $\r$ of testing outcomes can be expressed as follows
	$$
	\r=\bigvee\limits_{i\in D}\c_i,
	$$
	where $D$ is the set of defectives. If $D$ is an empty set then we define $\r$ as all-zero vector.
	
	Introduce the definitions of union-free codes and disjunctive codes, which were called \textit{uniquely decipherable} and \textit{zero-drop-rate} codes in the original work~\cite{kautz1964nonrandom}.
	
	\begin{definition}\label{def::union-free}
		Call a matrix $C$ a $d$-union-free code if for any two different sets $D_1, D_2\subset [n]$, $|D_1|,|D_2|\le d$, the Boolean sums $\bigvee\limits_{i\in D_1}\c_i$ and $\bigvee\limits_{i\in D_2}\c_i$ are distinct.
	\end{definition}
	
	\begin{definition}\label{def::disjunctive}
		Call a matrix $C$ a $d$-disjunctive code if for any  set $D\subset [n]$, $|D|=d$, the Boolean sum $\bigvee\limits_{i\in D}\c_i$ doesn't cover any column $\c_j$ for $j\in [n]\setminus D$.
	\end{definition}
	
	The well-known connection between these two families of codes was established in~\cite{kautz1964nonrandom}.
	\begin{proposition}
		A $d$-disjunctive code $C$ is a $d$-union-free code. A $d$-union-free code is a $(d-1)$-disjunctive code.
	\end{proposition}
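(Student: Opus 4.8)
The plan is to prove the two implications separately, each by exhibiting a direct contradiction with the relevant definition, using only the monotonicity and transitivity of the covering relation.

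\emph{First implication (a $d$-disjunctive code is $d$-union-free).} Suppose $C$ is $d$-disjunctive but, towards a contradiction, not $d$-union-free: there exist distinct $D_1,D_2\subseteq[n]$ with $|D_1|,|D_2|\le d$ and $\bigvee_{i\in D_1}\c_i=\bigvee_{i\in D_2}\c_i=:\r$. Since $D_1\ne D_2$, one of the symmetric-difference parts is nonempty, and as the two sums are interchangeable we may assume there is an index $j\in D_1\setminus D_2$. Because $j\in D_1$, the column $\c_j$ is covered by $\bigvee_{i\in D_1}\c_i=\r$, hence it is also covered by $\bigvee_{i\in D_2}\c_i$. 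To match the form of Definition~\ref{def::disjunctive}, which fixes $|D|=d$, I would enlarge $D_2$ to a set $\widehat D_2\supseteq D_2$ with $|\widehat D_2|=d$ and $j\notin\widehat D_2$; this is possible as long as $n\ge d+1$, which we may assume since the problem is trivial otherwise. Then $\bigvee_{i\in\widehat D_2}\c_i$ covers $\bigvee_{i\in D_2}\c_i$, which covers $\c_j$, so by transitivity $\bigvee_{i\in\widehat D_2}\c_i$ covers $\c_j$ with $j\in[n]\setminus\widehat D_2$ and $|\widehat D_2|=d$, contradicting $d$-disjunctiveness.

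\emph{Second implication (a $d$-union-free code is $(d-1)$-disjunctive).} Suppose $C$ is $d$-union-free but, towards a contradiction, not $(d-1)$-disjunctive: there is a set $D$ with $|D|=d-1$ and an index $j\in[n]\setminus D$ such that $\bigvee_{i\in D}\c_i$ covers $\c_j$, i.e.\ $\bigl(\bigvee_{i\in D}\c_i\bigr)\vee\c_j=\bigvee_{i\in D}\c_i$. Take $D_1=D$ and $D_2=D\cup\{j\}$; these are distinct (as $j\notin D$), both of size at most $d$, yet $\bigvee_{i\in D_2}\c_i=\bigl(\bigvee_{i\in D}\c_i\bigr)\vee\c_j=\bigvee_{i\in D}\c_i=\bigvee_{i\in D_1}\c_i$, contradicting Definition~\ref{def::union-free}.

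The only slightly delicate point is the bookkeeping in the first implication caused by the convention $|D|=d$ in Definition~\ref{def::disjunctive}: one must pad $D_2$ up to size exactly $d$ while keeping $j$ outside, which is why the harmless hypothesis $n\ge d+1$ enters (and which, incidentally, rules out all-zero columns in a disjunctive code, consistent with the argument). Everything else follows immediately from the definitions.
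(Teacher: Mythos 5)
Your proof is correct. The paper gives no proof of this proposition---it is quoted as a classical result established in~\cite{kautz1964nonrandom}---and your argument is the standard one; in particular, the padding of $D_2$ up to size exactly $d$ (needed because Definition~\ref{def::disjunctive} fixes $|D|=d$, under the harmless assumption $n\ge d+1$) and the direct construction $D_1=D$, $D_2=D\cup\{j\}$ for the second implication are both handled correctly.
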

	
	Both union-free codes and disjunctive codes can be used in combinatorial group testing to identify the set of $\le d$ defectives. Indeed, by definition of union-free codes there exist exactly one set $D$ of cardinality at most $d$, which can produce the observed vector of outcomes $\r$. However, to find this set $D$ we need to iterate over all possible sets of cardinality at most $d$, which requires $O(tn^d)$ operations. 
	
	Since disjunctive codes are also union-free codes, we know that they can be used in combinatorial group testing. Moreover, the decoding can be done efficiently using $O(tn)$ operations. The decoding algorithm outputs all indexes $i$, such that column $\c_i$ is covered by the vector of outcomes $r$. In other words, the algorithm outputs all samples, which appear only in positive tests. It is clear that all defectives will be found and no other elements will be included in the answer. In probabilistic group testing this decoding algorithm is known under the name COMP (Combinatorial orthogonal matching pursuit)~\cite{aldridge2019group}.
	
	Let's recall the definition of $d$-SSM codes from~\cite{fan2021strongly}.
	\begin{definition}
		A binary $t\times n$ matrix $C$ is called a strongly $d$-separable matrix (code), if for any $D_0\subset [n]$, $|D_0|=d$, we have
		$$
		\bigcap\limits_{D'\in U(D_0)}D'=D_0,
		$$
		where
		$$
		U(D_0)=\left\{D'\subset [n]:\bigvee\limits_{i\in D'} \c_i=\bigvee\limits_{i\in D_0} \c_i\right\}.
		$$
	\end{definition}

	It was established that $d$-SSM codes lie in between $d$-union-free codes and $d$-disjunctive codes.
	\begin{proposition}[Lemma II.3 from~\cite{fan2021strongly}]
		A $d$-disjunctive code is a $d$-SSM code. A $d$-SSM code is a $d$-union-free code.
	\end{proposition}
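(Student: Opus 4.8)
The plan is to establish the two implications separately, each by contradiction, deriving the contradiction in both cases directly from the defining inequality of the code class that is assumed. Throughout I will use the harmless standing assumption $n\ge d+1$ (for $n\le d$ the three notions degenerate), since this is what permits enlarging a small index set to size exactly $d$.

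For ``$d$-disjunctive $\Rightarrow$ $d$-SSM'', I would fix $D_0$ with $|D_0|=d$, write $\r=\bigvee_{i\in D_0}\c_i$, and prove the stronger statement $U(D_0)=\{D_0\}$, which instantly yields $\bigcap_{D'\in U(D_0)}D'=D_0$. Given $D'\in U(D_0)$, so $\bigvee_{i\in D'}\c_i=\r$: every $\c_i$ with $i\in D'$ is covered by $\r=\bigvee_{k\in D_0}\c_k$, and since $|D_0|=d$, $d$-disjunctiveness forces $i\in D_0$; hence $D'\subseteq D_0$. In the other direction, if some $k\in D_0\setminus D'$ existed, then $\bigvee_{i\in D'}\c_i=\r$ would cover $\c_k$ with $k\notin D'$ and $|D'|\le d$, and — extending $D'$ to a $d$-set that avoids $k$ and invoking that for a $d$-disjunctive code a Boolean sum of at most $d$ columns covers no column outside it — this contradicts $d$-disjunctiveness. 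Therefore $D'=D_0$.

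For ``$d$-SSM $\Rightarrow$ $d$-union-free'', suppose $C$ were not $d$-union-free, so there are distinct $D_1,D_2$ with $|D_1|,|D_2|\le d$ and $\bigvee_{i\in D_1}\c_i=\bigvee_{i\in D_2}\c_i=:\r$. After relabeling I may assume there is $j\in D_1\setminus D_2$. I would enlarge $D_1$ to $D_0$ with $D_1\subseteq D_0$ and $|D_0|=d$, and then observe $\bigvee_{i\in D_0}\c_i=\r\vee\bigvee_{i\in D_0\setminus D_1}\c_i=\bigvee_{i\in D_2\cup(D_0\setminus D_1)}\c_i$, so that $D':=D_2\cup(D_0\setminus D_1)$ lies in $U(D_0)$. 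But $j\in D_0$ (since $j\in D_1$) while $j\notin D'$ (since $j\notin D_2$ and $j\notin D_0\setminus D_1$), whence $j\notin\bigcap_{D''\in U(D_0)}D''$, contradicting the $d$-SSM identity $\bigcap_{D''\in U(D_0)}D''=D_0$.

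The first implication is routine once one notices $U(D_0)$ collapses to $\{D_0\}$. The main obstacle is a pitfall in the second implication: $U(D_0)$ contains index sets of every cardinality, and the $d$-SSM definition only excludes members of $U(D_0)$ that fail to contain $D_0$, tolerating strict supersets of $D_0$. So to contradict $d$-SSM one must produce a member of $U(D_0)$ that genuinely \emph{omits} an element of $D_0$; the decisive choice is to build $D_0$ around the set carrying the witness $j$ and substitute the other set for the complement $D_0\setminus D_1$. Enlarging $D_2$ instead, or trying to symmetrize, only manufactures supersets of $D_0$ and produces no contradiction.
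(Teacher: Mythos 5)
Your proof is correct. Note that the paper itself offers no argument for this proposition --- it is quoted as Lemma~II.3 of the cited reference on strongly separable matrices --- so there is no in-paper proof to compare against; your two-part argument (showing $U(D_0)=\{D_0\}$ under $d$-disjunctiveness, and, for the second implication, enlarging the set containing the witness $j\in D_1\setminus D_2$ to a $d$-set $D_0$ and exhibiting $D_2\cup(D_0\setminus D_1)\in U(D_0)$ omitting $j$) is exactly the standard route and is sound. Your explicit caveat $n\ge d+1$ is well taken: with the literal definitions, a matrix with $n=d$ identical columns is vacuously $d$-disjunctive but not $d$-SSM, so the room to extend a small index set to a $d$-set avoiding a given column is genuinely needed, and you use it correctly in both places.
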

	
	The authors of the paper~\cite{fan2021strongly} proposed an efficient $O(tn)$ decoding algorithm for $d$-SSM code. In fact, their algorithm coincides with DD (definitely defective) algorithm, which has been previously used in probabilistic group testing (see~\cite{aldridge2014group,aldridge2019group}).

	Let $n_{UF}(t, d)$, $n_{D}(t, d)$ and $n_{SSM}(t, d)$ denote the maximum cardinality of $d$-union-free, $d$-disjunctive and $d$-SSM codes with length $t$. Introduce their maximum asymptotic rates
	$$
	R_{UF}(d)=\limsup_{t\to\infty}\frac{\log_2 n_{UF}(t, d)}{t};
	$$
	$$
	R_{D}(d)=\limsup_{t\to\infty}\frac{\log_2 n_{D}(t, d)}{t};
	$$
	$$
	R_{SSM}(d)=\limsup_{t\to\infty}\frac{\log_2 n_{SSM}(t, d)}{t}.
	$$
	
	Obviously,
	$$
	R_{D}(d)\leq R_{SSM}(d)\leq R_{UF}(d).
	$$	
	Therefore, $d$-SSM codes have the largest rate among the known codes with efficient decoding algorithm.
	
	\section{Union-free Codes with Fast Decoding}\label{sec::main}
	In this section we define union-free codes with fast decoding (UFFD codes) and describe their decoding algorithm.
	\begin{definition}
		A $t\times n$ binary matrix $C$ is a $d$-union-free code with fast decoding (UFFD code), if 
		\begin{enumerate}
			\item It is a $d$-union-free code.
			\item Each possible vector of outcomes $\r$, i.e. a vector, which can be obtained as a Boolean sum of at most $d$ columns of $C$, covers at most $n^{1/d}$ columns from $C$.
		\end{enumerate}
	\end{definition}
	The relation of UFFD codes with union-free and disjunctive codes is formulated in the following obvious proposition.
	
	\begin{proposition}\label{pr::UFFD and other codes}
		A $d$-disjunctive code is a $d$-UFFD code. A $d$-UFFD code is a $d$-union-free code.
	\end{proposition}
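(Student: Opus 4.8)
The plan is simply to unwind the definition of a $d$-UFFD code; both halves of the proposition then reduce to facts already recorded in the excerpt. The second half is immediate: clause~(1) in the definition of a $d$-UFFD code states that it is a $d$-union-free code, so there is nothing to prove.

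For the first half, let $C$ be a $d$-disjunctive code and check the two clauses of the UFFD definition in turn. Clause~(1), that $C$ is $d$-union-free, is precisely the earlier proposition asserting that every $d$-disjunctive code is $d$-union-free, so it may be invoked directly. The only clause carrying any content is~(2): I must show that each realizable outcome vector $\r=\bigvee_{i\in D}\c_i$, with $D\subseteq[n]$ and $|D|=m\le d$, covers at most $n^{1/d}$ columns of $C$.

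The key observation is that such an $\r$ covers \emph{only} the columns $\c_i$ with $i\in D$, so it covers at most $m\le d$ of them. For $m=d$ this is exactly Definition~\ref{def::disjunctive}. For $m<d$ I would argue by contradiction: if $\r$ covered some $\c_j$ with $j\notin D$, then, enlarging $D$ by $d-m$ indices different from $j$ (possible once $n\ge d+1$) to a set $D'$ of size $d$, the Boolean sum $\bigvee_{i\in D'}\c_i$ would still cover $\c_j$ with $j\notin D'$, contradicting $d$-disjunctiveness. Hence $\r$ covers at most $d$ columns, and since $d\le n^{1/d}$ in the regime of interest (fixed $d$, large $n$; concretely whenever $n\ge d^{d}$), clause~(2) follows and $C$ is a $d$-UFFD code.

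I expect no genuine obstacle here — the statement is flagged as obvious — beyond keeping track of the degenerate parameter range where $n^{1/d}$ falls below $d$, which is why the claim is really meant for the asymptotic group-testing regime $n\to\infty$ with $d$ fixed.
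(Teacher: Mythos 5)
Your proof is correct and matches what the paper intends: the paper states this proposition without proof, calling it obvious, and your unwinding — clause (1) from the known fact that $d$-disjunctive implies $d$-union-free, and clause (2) from the observation that in a $d$-disjunctive code a Boolean sum of $m\le d$ columns covers only those $m\le d\le n^{1/d}$ columns (via the extension-to-size-$d$ contradiction) — is exactly the intended argument. Your caveat that the numerical comparison $d\le n^{1/d}$ requires the asymptotic regime (roughly $n\ge d^{d}$, which is the only regime relevant to the rate statements) is accurate and worth noting, since the literal clause (2) degenerates for very small $n$.
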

	
	Let $n_{UFFD}(t, d)$ be the maximum cardinality of $d$-UFFD code of length $t$. Define the maximum asymptotic rate
	$$
	R_{UFFD}(d)=\limsup_{t\to\infty}\frac{\log_2 n_{UFFD}(t, d)}{t}.
	$$
	Proposition~\ref{pr::UFFD and other codes} implies
	$$
	R_{D}(d)\leq R_{UFFD}(d)\leq R_{UF}(d).
	$$
	
	Now we describe a decoding algorithm of $d$-UFFD code. The algorithm consists of two steps.
	
	\textbf{Step 1.} Find the set $\hat{D}$ of all indexes $i$, such that $\c_i$ is covered by the vector of outcomes $\r$.
	
	\textbf{Step 2.} Iterate over all sets $D_0$,  $D_0\subset \hat{D}$,   $|D_0|\leq d$. If $\bigvee\limits_{i\in D_0}\c_i=\r$, then output $D_0$ as an answer and finish the algorithm.
	
	\begin{proposition}
		The described algorithm successfully finds the set of $\leq d$ defectives. The number of required operations is $O(tn)$.
	\end{proposition}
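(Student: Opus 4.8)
The plan is to establish the two assertions --- correctness and the $O(tn)$ running time --- separately, each by unwinding the relevant definition.

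For correctness, let $D\subseteq[n]$ with $|D|\le d$ be the true set of defectives, so the observed vector is $\r=\bigvee_{i\in D}\c_i$. First I would note that each column $\c_i$ with $i\in D$ is covered by $\r$, hence $D\subseteq\hat D$ after Step~1. Thus $D$ is among the sets $D_0$ examined in Step~2, and it satisfies $\bigvee_{i\in D_0}\c_i=\r$, so Step~2 does halt with some output. To see that the output equals $D$, observe that every $D_0\subseteq\hat D$ with $|D_0|\le d$ is in particular a subset of $[n]$ of cardinality at most $d$; by the $d$-union-free property at most one such $D_0$ can have Boolean sum equal to $\r$, and $D$ is one of them, so the set on which the algorithm terminates is necessarily $D$. (The degenerate case $D=\emptyset$, i.e. $\r$ equal to the all-zero vector, is covered too: a union-free code contains no all-zero column, so $\hat D=\emptyset$ and the algorithm returns $\emptyset$.)

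For the complexity, Step~1 amounts to comparing $\r$ coordinatewise with each of the $n$ columns, which is $O(t)$ per column and $O(tn)$ overall. The crucial observation for Step~2 is that the observed $\r$ is, by construction, a Boolean sum of at most $d$ columns of $C$, so the second defining property of a $d$-UFFD code applies and yields $|\hat D|\le n^{1/d}$. Hence the number of subsets $D_0\subseteq\hat D$ with $|D_0|\le d$ is at most $\sum_{k=0}^{d}\binom{|\hat D|}{k}=O(|\hat D|^{d})=O(n)$, with the implied constant depending only on $d$; and for each such $D_0$ the Boolean sum $\bigvee_{i\in D_0}\c_i$ is obtained from at most $d$ vectors of length $t$ and compared with $\r$ in $O(dt)=O(t)$ time. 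So Step~2 also runs in $O(tn)$, and adding the two steps gives the claimed bound.

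I do not anticipate a real difficulty in this argument; the only points requiring a little care are that the bound $|\hat D|\le n^{1/d}$ is guaranteed precisely for vectors realizable as Boolean sums of at most $d$ columns --- which is exactly the decoding situation --- and that $d$ is treated as a fixed constant, so that $O(|\hat D|^{d})$ collapses to $O(n)$ and the per-subset work stays $O(t)$.
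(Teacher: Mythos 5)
Your proof is correct and follows essentially the same route as the paper: correctness via the uniqueness of a set of size at most $d$ with Boolean sum $\r$ guaranteed by the union-free property, and the $O(tn)$ bound by combining the $O(tn)$ scan in Step~1 with the bound $|\hat D|\le n^{1/d}$ from the second defining property of UFFD codes, giving $O(t|\hat D|^d)=O(tn)$ for Step~2. Your additional details (the observation $D\subseteq\hat D$ and the treatment of the empty defective set) are sound refinements of the paper's terser argument.
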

	\begin{proof}
		Since UFFD codes are union-free codes, we know that there exists exactly one set $D_0$,  $D_0\subset \hat{D}$,   $|D_0|\leq d$ such that $\bigvee\limits_{i\in D_0}\c_i=\r$. This set will be found by the algorithm during the second step.
		
		To perform the first step we need to scan the matrix $C$, so its complexity $O(tn)$. Complexity of the second step can be upper bounded by $O(t|\hat{D}|^d) $. Definition of UFFD codes guarantees us that $|\hat{D}|\leq n^{1/d}$, therefore the total complexity is $O(tn)$.
		
	\end{proof}

	\section{Lower bound for $d=2$}\label{sec::proof}
	Let's recall known bounds for different families of codes.
	For $d\to\infty$ lower~\cite{dyachkov1989superimposed,quang1988bounds} and upper~\cite{d1982bounds,ruszinko1994upper,furedi1996onr} bounds look as follows
	\begin{align*}
		\frac{\ln 2}{d^2}(1+o(1))\leq 
		R_{D}(d)\leq R_{SSM}(d)\\
		\leq R_{UF}(d)\leq \frac{2\log_2d}{d^2}(1+o(1)).
	\end{align*}
	
	For $d=2$ it is known~\cite{coppersmith1998new,d1982bounds,erdos1982families,fan2021strongly}
	$$
	0.3135\leq R_{UF}(2)\leq 0.4998;
	$$
	$$
	0.1814\leq R_{D}(2)\leq 0.3219;
	$$
	$$
	0.2213\leq R_{SSM}(2)\leq 0.4998.
	$$
	
	From Proposition~\ref{pr::UFFD and other codes} we obtain
	$$
	\frac{\ln 2}{d^2}(1+o(1))\leq R_{UFFD}(d)\leq \frac{2\log_2d}{d^2}(1+o(1))
	$$
	for $d\to\infty$
	and 
	$$
	0.1814\leq R_{UFFD}(2)\leq 0.4998
	$$
	for $d=2$.
	
	In the following theorem we improve the lower bound for $d=2$.
	\begin{theorem}\label{th::lower bound}
		$$R_{UFFD}(2)\geq 0.3017.$$
		More precise,
		\begin{equation*}
			R_{UFFD}(2)\geq\max\limits_{p\in(0, 1)}\min(R_0, R_1), 
		\end{equation*}
		where
		\begin{equation*}
			R_0=\min\limits_{\alpha\in A}
			\frac{4h(p)-2\alpha h(p/\alpha)-2ph((\alpha-p)/p)-h(\alpha)}{3},
		\end{equation*}
		
		\begin{equation*}
			R_1= \min\limits_{\alpha\in A}\frac{2h(p)-2ph((\alpha-p)/p)-(1-p)h(\frac{\alpha-p}{1-p})}{2}
		\end{equation*}
		and $A=(p, \min(2p, 1)).$
	\end{theorem}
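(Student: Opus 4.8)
The plan is to use the probabilistic method with a random binary matrix whose entries are i.i.d.\ Bernoulli$(p)$, and then carefully analyze the expected number of "bad" configurations that violate either the union-free property or the bounded-cover property from the definition of a UFFD code. Concretely, I would pick a random $t \times n'$ matrix with $n' = 2^{Rt}$ columns for $R$ slightly below the claimed bound, estimate the expected number of obstructions, show it is $o(n')$, and then delete one column from each obstruction to obtain a genuine $2$-UFFD code with $n \geq n'/2$ columns, which suffices for the asymptotic rate. The optimization over $p \in (0,1)$ in the theorem statement reflects the freedom in choosing the Bernoulli parameter, and the inner minimization over $\alpha \in A = (p, \min(2p,1))$ will come from a union bound over the possible weights (normalized to $[0,1]$, hence the variable $\alpha$) of the relevant Boolean sums.

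The two quantities $R_0$ and $R_1$ should correspond to the two distinct types of obstruction. For the union-free condition with $d = 2$, the dangerous events are of the form $\c_i \vee \c_j = \c_k \vee \c_l$ with $\{i,j\} \neq \{k,l\}$; by the known reduction (a $2$-union-free code is $1$-disjunctive, i.e.\ no column covers another) the essential new case is when all four indices are distinct, or when one index is shared, e.g.\ $\c_i \vee \c_j = \c_i \vee \c_k$. I expect $R_0$ (with denominator $3$, suggesting three "free" columns after fixing one, and with a term $h(\alpha)$ for the common support of weight $\alpha$) to bound the probability that a fixed $4$-tuple forms a union-free violation, so that the expected count is $\binom{n'}{\le 4}$ times that probability; requiring this to be $o(n')$ forces $R < R_0$. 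Similarly $R_1$ (denominator $2$) should govern the second UFFD condition: the probability that a Boolean sum $\r = \c_i \vee \c_j$ of two fixed columns covers a third fixed column $\c_k$ — the event determining the size of $\hat D$ — and controlling the expected number of columns covered by each possible $\r$ to keep it below $n^{1/2}$ forces $R < R_1$. I would set up the weight-enumeration carefully: condition on $\text{wt}(\c_i) = \text{wt}(\c_j) = pt$ (typical weight), let $\alpha t$ be the weight of $\c_i \vee \c_j$, count via multinomial coefficients the ways supports can overlap, convert sums of binomials to $2^{t \cdot (\text{entropy expression})}$ via the standard $\binom{t}{\beta t} = 2^{t h(\beta) (1+o(1))}$ estimate, and read off the exponents to match the formulas for $R_0$ and $R_1$. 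The terms $\alpha h(p/\alpha)$ and $p h((\alpha-p)/p)$ are exactly what one gets from such overlap counts (e.g.\ choosing which $p/\alpha$ fraction of the union's support lies in a given column).

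The main obstacle, and the part requiring the most care, will be the second (covering) condition: unlike the union-free property, which is a property of the final code, the bound $|\hat D| \le n^{1/d}$ must hold for \emph{every} possible outcome vector $\r$ simultaneously, so a naive first-moment deletion argument needs to be organized so that deleting columns to fix covering violations does not re-introduce union-free violations, and vice versa. I would handle this by defining a single "bad" event per small tuple of columns that captures \emph{both} failure modes, bounding $\mathbb{E}[\#\text{bad tuples}]$ by the sum of the two contributions, and choosing $R < \min(R_0, R_1)$ so that this expectation is $o(n')$; then a single round of deletions removes all obstructions of both types at once. A secondary technical point is verifying that restricting to columns of typical weight $pt$ (rather than summing over all weights) loses nothing in the exponent — this is standard, since atypical-weight columns are exponentially rare, but it should be stated. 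Finally, plugging in a near-optimal $p$ (numerically around $p \approx 0.3$–$0.35$) and evaluating the inner minima over $\alpha$ yields the explicit constant $0.3017$.
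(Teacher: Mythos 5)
Your overall skeleton (random $t\times n$ matrix, first-moment estimates, deleting a column per violation, entropy asymptotics of binomial coefficients, optimization over $p$ and a union bound over the normalized weight $\alpha$ of the outcome vector) is the same as the paper's, and your dichotomy of union-free violations (four distinct indices versus one shared index, $\c_i\vee\c_j=\c_i\vee\c_k$) is exactly the right one. However, you misassign the two exponents in the theorem. In the actual argument \emph{both} $R_0$ and $R_1$ come from the union-free condition: $R_0$ from pairs $D_1,D_2$ with $|D_1\cap D_2|=0$ (expected number of violations at most $n^4P_0$, whence the denominator $3$), and $R_1$ from pairs with $|D_1\cap D_2|=1$ (expected number at most $n^3P_1$, whence the denominator $2$). $R_1$ has nothing to do with the covering condition; if you only require that the expected number of columns covered by a fixed outcome vector be below $n^{1/2}$, the constraint you get is of the form $R\le 2\bigl(h(p)-\alpha h(p/\alpha)\bigr)$, which is not $R_1$ and is not binding, while you would simultaneously lose the genuine constraint from the three-index union-free case (you folded it into the ``$\binom{n}{\le 4}$'' count for $R_0$, but it has a different probability exponent and a different power of $n$). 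So the plan as described would not reproduce the stated bound.

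The second, more serious gap is your treatment of the covering condition via ``one bad event per small tuple'' plus deletion. This cannot work: the natural small event ``$\c_i\vee\c_j$ covers $\c_k$'' has probability roughly $\binom{2pt}{pt}/\binom{t}{pt}=2^{-(h(p)-2p)t(1+o(1))}\approx n^{-0.9}$ at the optimal $p\approx 0.31$, so the expected number of such triples is about $n^{2.1}\gg n$, and deleting a column per triple would destroy the code (the code is far from $2$-disjunctive -- that is precisely why UFFD codes beat disjunctive codes). Moreover, bounding expectations per outcome vector is insufficient, since the cap $n^{1/2}$ must hold simultaneously for every possible outcome vector. The paper instead fixes one candidate outcome vector of weight at most $2pt$, observes that the number of columns it covers is stochastically dominated by $\Bin(n,q)$ with $q<n^{-0.9+o(1)}$, uses a binomial tail bound to make the failure probability $n^{-\Theta(\sqrt{n})}$, and union-bounds over the at most $2^t$ candidate vectors; this property then holds w.h.p.\ for the whole random matrix, requires no deletion, and is preserved when columns are subsequently deleted to remove the (at most $n/2$) union-free violations, the final code having $n/2$ columns so that the required cap is exactly $\sqrt{n/2}$. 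A further minor point: the paper samples columns uniformly of fixed weight $pt$; with i.i.d.\ Bernoulli entries and $2^{Rt}$ columns, exponentially many columns have atypical weight, so ``conditioning on typical weight'' needs an explicit extra step (e.g.\ deleting atypical columns) rather than being dismissed as standard.
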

	
	Note that this bound is significantly larger than lower bounds for $R_D(2)$ and $R_{SSM}(2)$.
	\begin{proof}
		Consider a random binary $t\times n$ matrix $C$, $n=\lfloor 2^{Rt}\rfloor$, each column of which is chosen independently and equiprobable from the set of column of a fixed weight $\lfloor pt\rfloor$ for some real $p\in(0,\,1)$. Parameters $p$ and $R$ would be specified later. In the following we omit the $\lfloor \cdot\rfloor$ sign for simplicity, since it doesn't affect the asymptotic rate. Fix two sets of indexes $D_1$ and $D_2$, $D_1, D_2\subset [n]$, $|D_1|=|D_2|=2$. Call a pair of sets $D_1$ and $D_2$ bad, if 
		\begin{equation}\label{eq::equal outcomes}
			\bigvee\limits_{i\in D_1}\c_i=\bigvee\limits_{i\in D_2}\c_i.
		\end{equation}
		
		Let's estimate a mathematical expectation $E$ of  a number of bad pairs. Consider two cases: $|D_1\cap D_2|=0$ and $|D_1\cap D_2|=1$.
		
		The probability of~\eqref{eq::equal outcomes} for the case $|D_1\cap D_2|=0$ is equal to
		\begin{equation}
			P_0=\sum\limits_{k=pt}^{\min(2pt,t)} \frac{\binom{t}{k}\binom{k}{pt}^2\binom{pt}{k-pt}^2}{\binom{t}{pt}^4},
		\end{equation}
		where $k$ is a weight of a possible outcome vector $\r$. The mathematical expectation of the number of such bad pairs is upper bounded as follows
		$$ E_0\leq n^4P_0.$$
		
		For the case $|D_1\cap D_2|=1$ the probability of~\eqref{eq::equal outcomes} is equal to
		\begin{equation}
			P_1=\sum\limits_{k=pt}^{\min(2pt, t)}\frac{\binom{t-pt}{k-pt}\binom{pt}{k-pt}^2}{\binom{t}{pt}^2},
		\end{equation}
		and the mathematical expectation is at most
		$$
		E_1\leq n^3P_1.
		$$
		
		Let $R$ be the maximal value such that $E_0<n/8$ and $E_1<n/8$. For such $R$ the mathematical expectation of  the number of bad pairs is less than $n/4$, therefore, by Markov's inequality the probability that the number of bad pairs greater than $n/2$ is less than $1/2$.
		
		Let's find this parameter $R$. In the following inequalities we use the fact $\binom{n}{k}=2^{n(h(k/n)+o(1))}$, $n\to\infty$.
		
		\begin{align*}
			\log_2 E_0&\leq 4\log_2 n + \log_2 P_0\\
			&\leq
			4\log_2 n + \log_2 \left(t \max_{k\in [pt, \min(2pt, t)]}  \frac{\binom{t}{k}\binom{k}{pt}^2\binom{pt}{k-pt}^2}{\binom{t}{pt}^4}\right)\\
			&=4\log_2 n + o(\log_2 n) +t\\&\times\max_{\alpha\in A}\left(h(\alpha)+2\alpha h(\frac{p}{\alpha})+2ph(\frac{\alpha-p}{p})-4h(p)\right)
		\end{align*}
	    This expression should be less than $\log_2(n/8)=\log_2 n+o(\log_2 n)$, which is equivalent to
	    $$
	    \frac{\log_2n}{t}\leq R_0+o(1).
	    $$
	    
	    Now we deal with $E_1$. 
	    \begin{align*}
	    	\log_2 E_1&\leq 3\log_2 n + \log_2 P_1\\
	    	&\leq
	    	3\log_2 n + \log_2 \left(t \max_{k\in [pt, \min(2pt, t)]}  \frac{\binom{t-pt}{k-pt}\binom{pt}{k-pt}^2}{\binom{t}{pt}^2}\right)\\
	    	&=3\log_2 n + o(\log_2 n) +t
	    	\\&\times\max_{\alpha\in A}\left(
	    	(1-p)h(\frac{\alpha-p}{1-p})+2ph(\frac{\alpha-p}{p})-2h(p)\right)
	    \end{align*}
    
    The last expression should be less than $\log_2(n/8)=\log_2 n+o(\log_2 n)$, which is equivalent to
    $$
    \frac{\log_2n}{t}\leq R_1+o(1).
    $$
		
		So, we obtain that
		\begin{equation*}
			R\geq\max_p\min(R_0, R_1)+o(1). 
		\end{equation*}

		Numerical computations show that the optimal value $R$ is attained in $p\approx 0.3105$ and equals $\approx0.3017$.
		
		Now estimate the probability $P$ that one fixed response vector of weight $\leq 2pt$ covers more then $\sqrt{n/2}$ columns of $C$. The number of covered columns is stochastically dominated by a binomial random variable $\xi\sim\Bin(n, q) $, where
		$$
		q=\frac{\binom{2pt}{pt}}{\binom{t}{pt}}=2^{(2p-h(p)+o(1))t}<n^{-0.9+o(1)}
		$$
		for $p=0.3105$.
		Then the probability $P$ is upper bounded as follows
		\begin{align*}
			P&\leq n\binom{n}{\sqrt{n/2}} \left(n^{-0.9+o(1)}\right)^{\sqrt{n/2}}\\
			&<
			\frac{n^{(0.1+o(1))\sqrt{n/2}}}{\sqrt{n/2}!}<n^{(-0.4+o(1))\sqrt{n/2}}.
		\end{align*}
		The probability that at least one response vector of weight $\leq 2pt$ covers more then $\sqrt{n/2}$ columns can be upper bounded as 
		$2^tP\to0$.
		
		We conclude that for any $t$ big enough with a positive probability there exist a $t\times n$ code $C$ with a rate $R\geq 0.3017+o(1)$ such that the mathematical expectation of the number of bad pairs is at most $n/2$, and the number of columns, which can be covered by some vector of weight $\leq 2pt$, is at most $\sqrt{n/2}$. After deleting of one column from every bad pair we obtain a 2-UFFD code of length $t$, cardinality $n/2$ and rate $0.3017+o(1)$.
		
	\end{proof}
	\section{Conclusion}\label{sec::conclusion}
	In this paper we introduced a new family of codes, which solves the combinatorial group testing problem and provides an efficient $O(tn)$ decoding algorithm. The rate of the new family lies  between the rates of union-free and disjunctive codes. For the case of $d=2$ defectives we proved a new lower bound on the rate $0.3017$, which is significantly better than corresponding bounds for disjunctive codes and strongly separable matrices. An interesting open question is to explore the possibility of the application of the new decoding algorithm in other group testing problems.
	\section*{Acknowledgement}
	The reported study was supported by RFBR and JSPS under Grant No.~20-51-50007, and by RFBR and National Science Foundation of Bulgaria (NSFB), project number 20-51-18002. 
	\bibliographystyle{IEEEtran}
	\bibliography{gt}

\begin{thebibliography}{10}
\providecommand{\url}[1]{#1}
\csname url@samestyle\endcsname
\providecommand{\newblock}{\relax}
\providecommand{\bibinfo}[2]{#2}
\providecommand{\BIBentrySTDinterwordspacing}{\spaceskip=0pt\relax}
\providecommand{\BIBentryALTinterwordstretchfactor}{4}
\providecommand{\BIBentryALTinterwordspacing}{\spaceskip=\fontdimen2\font plus
\BIBentryALTinterwordstretchfactor\fontdimen3\font minus
  \fontdimen4\font\relax}
\providecommand{\BIBforeignlanguage}[2]{{%
\expandafter\ifx\csname l@#1\endcsname\relax
\typeout{** WARNING: IEEEtran.bst: No hyphenation pattern has been}%
\typeout{** loaded for the language `#1'. Using the pattern for}%
\typeout{** the default language instead.}%
\else
\language=\csname l@#1\endcsname
\fi
#2}}
\providecommand{\BIBdecl}{\relax}
\BIBdecl

\bibitem{dorfman1943detection}
R.~Dorfman, ``The detection of defective members of large populations,''
  \emph{The Annals of Mathematical Statistics}, vol.~14, no.~4, pp. 436--440,
  1943.

\bibitem{du2000combinatorial}
D.~Du, F.~K. Hwang, and F.~Hwang, \emph{Combinatorial group testing and its
  applications}.\hskip 1em plus 0.5em minus 0.4em\relax World Scientific, 2000,
  vol.~12.

\bibitem{aldridge2019group}
M.~Aldridge, O.~Johnson, and J.~Scarlett, ``Group testing: an information
  theory perspective,'' \emph{arXiv preprint arXiv:1902.06002}, 2019.

\bibitem{d2014lectures}
A.~G. D'yachkov, ``Lectures on designing screening experiments,'' \emph{arXiv
  preprint arXiv:1401.7505}, 2014.

\bibitem{cicalese2013fault}
F.~Cicalese, \emph{Fault-Tolerant Search Algorithms}.\hskip 1em plus 0.5em
  minus 0.4em\relax Springer, 2013.

\bibitem{kautz1964nonrandom}
W.~Kautz and R.~Singleton, ``Nonrandom binary superimposed codes,'' \emph{IEEE
  Transactions on Information Theory}, vol.~10, no.~4, pp. 363--377, 1964.

\bibitem{fan2021strongly}
J.~Fan, H.-L. Fu, Y.~Gu, Y.~Miao, and M.~Shigeno, ``Strongly separable matrices
  for nonadaptive combinatorial group testing,'' \emph{Discrete Applied
  Mathematics}, vol. 291, pp. 180--187, 2021.

\bibitem{coppersmith1998new}
D.~Coppersmith and J.~B. Shearer, ``New bounds for union-free families of
  sets,'' \emph{the electronic journal of combinatorics}, vol.~5, no.~1, p.
  R39, 1998.

\bibitem{aldridge2014group}
M.~Aldridge, L.~Baldassini, and O.~Johnson, ``Group testing algorithms: Bounds
  and simulations,'' \emph{IEEE Transactions on Information Theory}, vol.~60,
  no.~6, pp. 3671--3687, 2014.

\bibitem{dyachkov1989superimposed}
A.~G. Dyachkov, V.~Rykov, and A.~Rashad, ``Superimposed distance codes,''
  \emph{Problems of Control and Information Theory-problemy Upravleniya i
  Teorii Informatsii}, vol.~18, no.~4, pp. 237--250, 1989.

\bibitem{quang1988bounds}
T.~Z. A.~Nguyen~Quang, ``Bounds on constant weight binary superimposed codes.''
  \emph{Probl. Control Inf. Theory.}, vol.~17, no.~4, pp. 223--230, 1988.

\bibitem{d1982bounds}
A.~G. D'yachkov and V.~V. Rykov, ``Bounds on the length of disjunctive codes,''
  \emph{Problemy Peredachi Informatsii}, vol.~18, no.~3, pp. 7--13, 1982.

\bibitem{ruszinko1994upper}
M.~Ruszink{\'o}, ``On the upper bound of the size of the r-cover-free
  families,'' \emph{Journal of Combinatorial Theory, Series A}, vol.~66, no.~2,
  pp. 302--310, 1994.

\bibitem{furedi1996onr}
Z.~F{\"u}redi, ``Onr-cover-free families,'' \emph{Journal of Combinatorial
  Theory, Series A}, vol.~73, no.~1, pp. 172--173, 1996.

\bibitem{erdos1982families}
P.~Erd{\"o}s, P.~Frankl, and Z.~F{\"u}redi, ``Families of finite sets in which
  no set is covered by the union of two others,'' \emph{Journal of
  Combinatorial Theory, Series A}, vol.~33, no.~2, pp. 158--166, 1982.

\end{thebibliography}

	
	

\end{document}